\numberwithin{equation}{section}
\theoremstyle{definition}
\newtheorem{dfn}{Definition}[section]
\theoremstyle{plain}
\newtheorem{thm}{Theorem}[section]
\newtheorem{cor}{Corollary}[section]
\newtheorem{lmm}{Lemma}[section]
\theoremstyle{definition}
\newtheorem{rem}{Remark}[section]
\newcommand{\R}{\mathbb{R}}
\newcommand{\E}{\mathbb{E}}
\newcommand{\F}{\mathcal{F}}
\newcommand{\e}{\mathrm{e}}
\newcommand{\1}{\mathbf{1}}
\renewcommand{\d}{\mathrm{d}}
\begin{document}
\title[Hedging with fractional Brownian motion]
{Hedging in fractional Black--Scholes Model with Transaction Costs}

\date{\today}

\author[Shokrollahi]{Foad Shokrollahi}
\address{Department of Mathematics and Statistics, University of Vaasa, P.O. Box 700, FIN-65101 Vaasa, FINLAND}
\email{foad.shokrollahi@uva.fi}

\author[Sottinen]{Tommi Sottinen}
\address{Department of Mathematics and Statistics, University of Vaasa, P.O. Box 700, FIN-65101 Vaasa, FINLAND}
\email{tommi.sottinen@iki.fi}

\begin{abstract}
We consider conditional-mean hedging in a fractional Black--Scholes pricing model in the presence of proportional transaction costs.
We develop an explicit formula for the conditional-mean hedging portfolio in terms of the recently discovered explicit conditional law of the fractional Brownian motion.
\end{abstract}

\thanks{F. Shokrollahi was funded by the graduate school of the University of Vaasa.
\\
We thank the referee for valuable comments.}

\keywords{Delta-hedging;
Fractional Black--Scholes model;
Transaction costs;
Option pricing}

\subjclass[2010]{91G20; 91G80; 60G22}

\maketitle

\section{Introduction}

%

We consider discrete hedging in fractional Black--Scholes models where the asset price is driven by a long-range dependent fractional Brownian motion.  For a convex or concave European vanilla type option, we construct the so-called conditional-mean hedge.  This means that at each trading time the value of the conditional mean of the discrete hedging strategy coincides with the frictionless price.  By frictionless we mean the continuous trading hedging price without transaction costs. The key ingredient in constructing the conditional mean hedging strategy is the recent representation for the regular conditional law of the fractional Brownian motion given in \cite{Sottinen-Viitasaari-2017b}.  Let us note that there are arbitrage strategies with continuous trading without transaction costs, but not with discrete trading strategies, even in the absence of trading costs.  For details of the use of fractional Brownian motion in finance and discussion on arbitrage see \cite{Bender-Sottinen-Valkeila-2011}.

For the classical Black--Scholes model driven by the Brownian motion, the study of hedging under transaction costs goes back to Leland \cite{Leland-1985}. See also
Denis and Kabanov \cite{Denis-Kabanov-2010} and Kabanov and Safarian \cite{Kabanov-Safarian-2009} for a mathematically rigorous treatment.
For the fractional Black--Scholes model driven by the long-range dependent fractional Brownian motion, the study of hedging under transaction costs was studied in Azmoodeh \cite{Azmoodeh-2013}.
In the series of articles \cite{Shokrollahi-Kilicman-Magdziarz-2016,Wang-2010a,Wang-2010b,Wang-Yan-Tang-Zhu-2010,Wang-Zhu-Tang-Yan-2010} the discrete hedging in the fractional Black--Scholes model was studied by using the economically dubious Wick--It\^o--Skorohod interpretation of the self-financing condition.  Actually, with the economically solid forward-type pathwise interpretation of the self-financing condition, these hedging strategies are valid, not for the geometric fractional Brownian motion, but for a geometric Gaussian process where the driving noise is a Gaussian martingale with the same variance function as the corresponding fractional Brownian motion would have, see \cite{Gapeev-Sottinen-Valkeila-2011,Shokrollahi-Kilicman-2014,Shokrollahi-Kilicman-2015,Shokrollahi-Kilicman-Ibrahim-Ismail-2015}.

\section{Preliminaries}

We are interested in pricing of European vanilla options $f(S_T)$ of a single underlying asset $S=(S_t)_{t\in[0,T]}$, where $T>0$ is a fixed time of maturity of the option.

We consider the discounted fractional Black--Scholes pricing model where the ``riskless'' investment, or the bond, is taken as the num\'eraire and the risky asset $S=(S_t)_{t\in[0,T]}$ is given by the dynamics
\begin{equation}\label{eq:sde-S}
\frac{\d S_t}{S_t} = \mu\, \d t + \sigma\, \d B_t,
\end{equation}
where $B$ is the fractional Brownian motion with Hurst index $H\in(\frac12,1)$.
Recall that, qualitatively, the fractional Brownian motion is the (up to a multiplicative constant) unique Gaussian process with stationary increments and self-similarity index $H$.  Quantitatively, the fractional Brownian motion is defined by its covariance function
$$
r(t,s) = \frac12\left[t^{2H}+s^{2H}-|t-s|^{2H}\right].
$$
Since the fractional Brownian motion with index $H\in(\frac12,1)$ has zero quadratic variation, the classical change-of-variables rule applies.  Consequently, the pathwise solution to the stochastic differential equation \eqref{eq:sde-S} is
\begin{equation}\label{eq:S-from-B}
S_t = S_0 \e^{\mu t + \sigma B_t}.
\end{equation}
Also, it follows from the classical change-of-variables rule that
\begin{equation}\label{eq:hedging-rule}
f(S_T) = f(S_0) + \int_0^T f'(S_t)\, \d S_t,
\end{equation}
where $f$ is a convex or concave function and $f'$ is its left-derivative.  We refer to Azmoodeh et al. \cite{Azmoodeh-Mishura-Valkeila-2009} for details.
The economic interpretation of \eqref{eq:hedging-rule} is that under continuous trading and no transaction costs, the replication price of a European vanilla option $f(S_T)$ is $f(S_0)$ and the replicating strategy is given by $\pi_t = f'(S_t)$, where $\pi_t$ denotes the number of the shares of the risky asset $S$ held by the investor at time $t$. Furthermore, we note that the value $V^{\pi}$ of the hedging strategy $\pi=f'(S_\cdot)$ at time $t$ is
\begin{eqnarray*}
V^\pi_t &=& V^\pi_0 + \int_0^t \pi_u\, \d S_u \\
&=&
f(S_0) + \int_0^t f'(S_u)\, \d S_u \\
&=&
f(S_t).
\end{eqnarray*}
Indeed, the first equality is simply the self-financing condition and the rest follows immediately from \eqref{eq:hedging-rule}. Note that this is very different from the value in the classical Black--Scholes model, where the value is determined by the Black--Scholes partial differential equation, which in turn comes to the It\^o's change-of-variables rule.

We assume that the trading only takes place at fixed preset time points $0=t_0<t_1<\cdots<t_N=T$. We denote by $\pi^N$ the discrete trading strategy
$$
\pi^N_t = \pi^N_0\1_{\{0\}}(t) + \sum_{i=1}^N \pi_{t_{i-1}}^N\1_{(t_{i-1},t_{i}]}(t).
$$
The value of the strategy $\pi^N$ is given by
\begin{equation}\label{eq:kappa}
V^{\pi^N,k}_t = V^{\pi^N,k}_0 + \int_0^t \pi^N_u \, \d S_u -
\int_0^t kS_{u} |\d\pi^N_{u}|,
\end{equation}
where $k\in[0,1)$ is the proportional transaction cost.

Under transaction costs perfect hedging is not possible.  In this case, it is natural to try to hedge on average in the sense of the following definition:

\begin{dfn}[Conditional-Mean Hedge]\label{dfn:cm-hedge}
Let $f(S_T)$ be a European vanilla type option with convex or concave payoff function $f$.  Let $\pi$ be its Markovian replicating strategy: $\pi_t=f'(S_t)$.  We call the discrete-time strategy $\pi^N$ a \emph{conditional-mean hedge}, if for all trading times $t_i$,
\begin{equation}\label{eq:cm-hedge}
\E\left[ V^{\pi^N, k}_{t_{i+1}}\,|\, \F_{t_i}\right] = \E\left[ V^{\pi}_{t_{i+1}}\,|\, \F_{t_i}\right].
\end{equation}
Here $\F_{t_i}$ is the information generated by the asset price process $S$ up to time $t_i$.
\end{dfn}

\begin{rem}[Conditional-Mean Hedge as Tracking Condition]
Criterion \eqref{eq:cm-hedge} is actually a tracking requirement.  We do not only require that the conditional means agree on the last trading time before the maturity, but also on all trading times.  In this sense the criterion has an ``American'' flavor in it.  From a purely ``European'' hedging point of view, one can simply remove all but the first and the last trading times.
\end{rem}

\begin{rem}[Arbitrage and Uniqueness of Conditional-Mean Hedge]
The conditional-mean hedging strategy $\pi^N$ depends on the continuous-time hedging strategy $\pi$.
Since there is strong arbitrage in the fractional Black--Scholes model (zero can be perfectly replicated with negative initial wealth), the replicating strategy $\pi$ is not unique. However, the strong arbitrage strategies are very complicated.  Indeed, it follows directly from the change-of-variables formula that in the class of Markovian strategies $\pi_t = g(t,S_t)$, the choice $\pi_t=f'(S_t)$ is the unique replicating strategy for the claim $f(S_T)$.
\end{rem}

We stress that the expectation in \eqref{eq:cm-hedge} is with respect to the true probability measure; not under any equivalent martingale measure.  Indeed, equivalent martingale measures do not exist in the fractional Black--Scholes model.

To find the solution to \eqref{eq:cm-hedge} one must be able to calculate the conditional expectations involved.  This can be done by using \cite[Theorem 3.1]{Sottinen-Viitasaari-2017b}, a version of which we state below as Lemma \ref{lmm:cond-fbm} for the readers' convenience.

\begin{lmm}[Conditional Fractional Brownian Motion]\label{lmm:cond-fbm}
The fractional Brownian motion $B$ with index $H\in(\frac12,1)$ conditioned on its own past $\F^B_u$ is the Gaussian process $B(u)=B|\F^B_u$ with $\F^B_u$-measurable mean
\begin{equation*}
\hat B_t(u) = B_u - \int_0^u \Psi(t,s|u) \, \d B_s,
\end{equation*}
where
$$
\Psi(t,s|u) =
- \frac{\sin(\pi(H-\frac{1}{2}))}{\pi} s^{\frac{1}{2}-H}(u-s)^{\frac{1}{2}-H}
\int_u^t \frac{z^{H-\frac{1}{2}}(z-u)^{H-\frac{1}{2}}}{z-s}\, \d z,
$$
and deterministic covariance function
\begin{equation*}
\hat r(t,s|u)
= r(t,s) - \int_0^u k(t,v)k(s,v)\d v,
\end{equation*}
where
$$
k(t,s) =
\left(H-\frac12\right)
\sqrt{\frac{2H\Gamma\left(\frac32-H\right)}{\Gamma\left(H-\frac12\right)\Gamma\left(2-2H\right)}}\,\,
s^{\frac12-H}\int_s^t z^{H-\frac12}(z-s)^{H-\frac23}\, \d z;
$$
$\Gamma$ is the Euler's gamma function.
\end{lmm}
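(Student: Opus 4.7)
The plan is to exploit the Molchan--Golosov representation $B_t = \int_0^t k(t,s)\,\d W_s$, where $W$ is a standard Brownian motion and $k$ is the kernel appearing in the statement of the lemma. Because the two processes generate the same filtration, $\F^B_u = \F^W_u$, so it suffices to compute the conditional law under the Brownian filtration and then translate back to an expression in terms of the observed fractional Brownian path.

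First I would verify the covariance identity $r(t,s) = \int_0^{t\wedge s} k(t,v)\,k(s,v)\,\d v$ for the specific $k$ given in the statement. This is a Fubini-type computation that simultaneously pins down the normalizing constant in $k$ and will feed directly into the covariance step below.

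The Gaussian structure and the conditional covariance then fall out of the decomposition, valid for $t\ge u$,
\begin{equation*}
B_t = \int_0^u k(t,s)\,\d W_s + \int_u^t k(t,s)\,\d W_s,
\end{equation*}
whose first term is $\F^W_u$-measurable and whose second term is independent of $\F^W_u$ with a centered Gaussian joint law in $t$. Hence the conditional process $B(u)$ is Gaussian with conditional mean $\int_0^u k(t,s)\,\d W_s$ and conditional covariance $\int_u^{t\wedge s} k(t,v)\,k(s,v)\,\d v$, which equals $r(t,s)-\int_0^u k(t,v)\,k(s,v)\,\d v$ by the first step, matching the stated $\hat r(t,s|u)$.

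The third step---and this is the main obstacle---is to rewrite the conditional mean as an integral against $\d B$ rather than $\d W$, which produces the kernel $\Psi$. On $[0,u]$ the Molchan--Golosov transform is invertible, and its inverse involves an Erd\'elyi--Kober-type kernel proportional to $s^{\frac12-H}(u-s)^{\frac12-H}$. Substituting the resulting expression for $\d W_s$ into $\int_0^u k(t,s)\,\d W_s$ and interchanging the order of integration produces an inner $z$-integral of the form $\int_u^t z^{H-\frac12}(z-u)^{H-\frac12}(z-s)^{-1}\,\d z$; the boundary of the inversion yields the term $B_u$, and the reflection identity $\Gamma(H-\frac12)\Gamma(\frac32-H)=\pi/\sin(\pi(H-\frac12))$ collapses the accumulated constants into the prefactor $-\sin(\pi(H-\frac12))/\pi$ of $\Psi$. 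Tracking these constants through the inversion of a singular fractional integral operator is the delicate part; I would invoke the explicit inversion worked out in \cite{Sottinen-Viitasaari-2017b} rather than redo the full algebra here.
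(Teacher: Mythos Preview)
The paper does not prove this lemma at all: it is stated ``for the readers' convenience'' as a restatement of \cite[Theorem~3.1]{Sottinen-Viitasaari-2017b}, with no accompanying argument. So there is nothing in the paper to compare your proposal against.

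That said, your sketch is the standard route to this result and is essentially how the cited reference proceeds: use the Molchan--Golosov Volterra representation $B_t=\int_0^t k(t,s)\,\d W_s$ to identify $\F^B_u=\F^W_u$, split the integral at $u$ to read off the conditional mean and covariance, and then invert the transform on $[0,u]$ to rewrite the mean as an integral against $\d B$. You correctly flag the inversion step as the only genuinely technical part and defer it to \cite{Sottinen-Viitasaari-2017b}, which is exactly what the present paper does for the entire lemma. One small caveat: the kernel $k$ printed in the paper has the exponent $H-\tfrac23$ in the inner integral, which is almost certainly a typo for $H-\tfrac32$; your verification of the covariance identity would catch this, since the stated $k$ does not reproduce $r(t,s)$.
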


\begin{rem}[Conditional Asset Process]\label{rem:cond-asset}
By \eqref{eq:S-from-B} we have the equality of filtrations: $\F_t^B = \F^S_t = \F_t$, for $t\in [0,T]$. Consequently, the conditional process $S(u) = S|\F_u$ is, informally, given by
\begin{eqnarray*}
S_t(u) &=& S_0\e^{\mu t + \sigma B_t(u)} \\
&=&
S_u\e^{\mu (t-u) + \sigma\left(B_t(u)-B_u\right)}.
\end{eqnarray*}
More formally, this means, in particular, that for $t>u$,
\begin{eqnarray*}
\E\left[f(S_t)\,\big|\, \F_u^S\right]
&=&
\E\left[f\left(S_0\e^{\mu t + \sigma B_t}\right) \, \big|\, \F_u^B\right] \\
&=&
\int_{-\infty}^\infty f\left(S_0\e^{\mu t + \sigma \hat B_t(u)+\sigma\sqrt{\hat r(t|u)}\, z}\right)\, \phi(z) \d z \\
&=&
\int_{-\infty}^\infty f\left(S_u\e^{\mu(t-u) + \sigma\left(\hat B_t(u)-B_u\right)+\sigma\sqrt{\hat r(t|u)}\, z}\right)\, \phi(z) \d z,
\end{eqnarray*}
where we have denoted
$$
\hat r(t|u) = \hat r(t,t|u),
$$
and $\phi$ is the standard normal density function.
\end{rem}

\section{Conditional-Mean Hedging Strategies}

Denote
$$
\Delta\hat B_{t_{i+1}}(t_i)
= \hat B_{t_{i+1}}(t_i) - B_{t_i}.
$$
In Theorem \ref{thm:dhedging} we will calculate the conditional-mean hedging strategy in terms of the following conditional gains:
\begin{eqnarray*}
\Delta\hat S_{t_{i+1}}(t_i)
&=&
\hat S_{t_{i+1}}(t_i) - S_{t_i} \\
&=&
\E\left[S_{t_{i+1}}|\F_{t_i}\right] - S_{t_i}, \\
\Delta\hat V^\pi_{t_{i+1}}(t_i)
&=&
\hat V^\pi_{t_{i+1}}(t_i) - V^\pi_{t_i} \\
&=&
\E\left[V^\pi_{t_{i+1}}|\F_{t_i}\right] - V^\pi_{t_i}, \\
\Delta\hat V^{\pi^N,k}_{t_{i+1}}(t_i)
&=&
\hat V^{\pi^N,k}_{t_{i+1}}(t_i) - V^\pi_{t_i} \\
&=&
\E\left[V^{\pi^N,k}_{t_{i+1}}|\F_{t_i}\right] - V^{\pi^N,k}_{t_i}.
\end{eqnarray*}
Lemma \ref{lmm:c-deltas} below states that all these conditional gains can be calculated explicitly by using the prediction law of the fractional Brownian motion.

\begin{lmm}[Conditional Gains]\label{lmm:c-deltas}
\begin{eqnarray*}
\Delta\hat S_{t_{i+1}}(t_i) &=&
S_{t_i}\left(\int_{-\infty}^\infty \e^{\mu\Delta t_{i+1}+\sigma\Delta \hat B_{t_{i+1}}(t_i)+\sigma\sqrt{\hat r(t_{i+1}|t_i)}z} \phi(z)\, \d z - 1\right), \\
\Delta\hat V^\pi_{t_{i+1}}(t_i)
&=&
\int_{-\infty}^\infty
f\left(S_{t_i}\e^{\mu\Delta t_{i+1}+\sigma\Delta \hat B_{t_{i+1}}(t_i)+\sigma\sqrt{\hat r(t_{i+1}|t_i)}z}\right) \phi(z)\, \d z - f(S_{t_i}), \\
\Delta\hat V^{\pi^N,k}_{t_{i+1}}(t_i)
&=&
\pi_{t_i}^N\Delta\hat S_{t_{i+1}}(t_i) - k S_{t_i}|\Delta \pi^N_{t_i}|.
\end{eqnarray*}
\end{lmm}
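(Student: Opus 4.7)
The plan is to treat the three identities separately, noting that the first two are direct specializations of the conditional expectation formula in Remark \ref{rem:cond-asset}, while the third is essentially unpacking the self-financing definition \eqref{eq:kappa} on a single trading interval and pulling $\F_{t_i}$-measurable quantities out of the conditional expectation.

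For the identity on $\Delta\hat S_{t_{i+1}}(t_i)$, I would apply the last displayed formula of Remark \ref{rem:cond-asset} with payoff $f(x)=x$, $t=t_{i+1}$, and $u=t_i$. This immediately yields
$$
\E[S_{t_{i+1}}\,|\,\F_{t_i}] = \int_{-\infty}^\infty S_{t_i}\,\e^{\mu\Delta t_{i+1}+\sigma\Delta\hat B_{t_{i+1}}(t_i)+\sigma\sqrt{\hat r(t_{i+1}|t_i)}\,z}\,\phi(z)\,\d z,
$$
and subtracting $S_{t_i}$ (using that $S_{t_i}$ is $\F_{t_i}$-measurable so it factors out of the integral) gives the stated formula. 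For the identity on $\Delta\hat V^\pi_{t_{i+1}}(t_i)$, I would invoke the computation just before Definition \ref{dfn:cm-hedge} showing that $V^\pi_t=f(S_t)$ for the Markovian replicating strategy $\pi_t=f'(S_t)$. Hence $\E[V^\pi_{t_{i+1}}\,|\,\F_{t_i}]=\E[f(S_{t_{i+1}})\,|\,\F_{t_i}]$, and applying Remark \ref{rem:cond-asset} once more (this time with the actual payoff $f$) and subtracting $V^\pi_{t_i}=f(S_{t_i})$ yields the second identity.

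For the identity on $\Delta\hat V^{\pi^N,k}_{t_{i+1}}(t_i)$, I would apply \eqref{eq:kappa} between the two trading times $t_i$ and $t_{i+1}$ to write
$$
V^{\pi^N,k}_{t_{i+1}} - V^{\pi^N,k}_{t_i} = \int_{t_i}^{t_{i+1}} \pi^N_u\,\d S_u - \int_{t_i}^{t_{i+1}} kS_u\,|\d\pi^N_u|.
$$
Since the strategy $\pi^N$ is piecewise constant with value $\pi^N_{t_i}$ on the interval where the first integral lives, the $\d S_u$-integral collapses to $\pi^N_{t_i}(S_{t_{i+1}}-S_{t_i})$; and the only jump of $\pi^N$ in the relevant time window happens at $t_i$ at asset price $S_{t_i}$, so the cost integral collapses to $kS_{t_i}|\Delta \pi^N_{t_i}|$. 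Taking $\E[\,\cdot\,|\,\F_{t_i}]$, the $\F_{t_i}$-measurable factors $\pi^N_{t_i}$, $S_{t_i}$ and $|\Delta\pi^N_{t_i}|$ come out, and what remains is $\E[S_{t_{i+1}}-S_{t_i}\,|\,\F_{t_i}] = \Delta\hat S_{t_{i+1}}(t_i)$.

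The only delicate point is justifying the precise evaluation of the pathwise integrals $\int_{t_i}^{t_{i+1}}\pi^N_u\,\d S_u$ and $\int_{t_i}^{t_{i+1}}kS_u|\d\pi^N_u|$ for the piecewise constant strategy defined in the previous section, i.e., confirming the convention that the transaction cost incurred by the rebalancing at $t_i$ is charged at price $S_{t_i}$ and is the only cost contribution in $(t_i,t_{i+1}]$. Once this bookkeeping is settled, the remaining manipulations are purely algebraic and measurability-based, so I would not expect any genuine analytic difficulty.
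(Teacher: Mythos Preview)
Your proposal is correct and follows essentially the same route as the paper: the first two identities are obtained from the conditional law of the fractional Brownian motion (you cite Remark~\ref{rem:cond-asset}, the paper applies Lemma~\ref{lmm:cond-fbm} directly with $g(x)=S_0\e^{\mu t+\sigma x}$ and $g(x)=f(S_0\e^{\mu t+\sigma x})$, which amounts to the same thing), and the third identity is obtained exactly as you describe by expanding \eqref{eq:kappa} over one trading interval and pulling out $\F_{t_i}$-measurable factors. Your caveat about the bookkeeping convention for the transaction-cost integral is well placed but is indeed taken for granted in the paper.
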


\begin{proof}
Let $g\colon\R\to\R$ be such that
$
\E\left[\left|g(B_{t_{i+1}})\right|\right] < \infty.
$
Then, by Lemma \ref{lmm:cond-fbm},
$$
\E\left[g(B_{t_{i+1}})\,|\, \F_{t_i}\right]
=
\int_{-\infty}^\infty g\left(\hat B_{t_{i+1}}(t_i)+\sqrt{\hat r(t_{i+1}|t_i)}z\right) \phi(z)\, \d z.
$$

Consider $\Delta\hat S_{t_{i+1}}(t_i)$.  By choosing
$$
g(x) = S_0\e^{\mu t + \sigma x},
$$
we obtain
\begin{eqnarray*}
\hat S_{t_{i+1}}(t_i)
&=&
\E\left[ S_{t_{i+1}}\,\big|\, \F_{t_i}\right] \\
&=&
\E\left[g\left(B_{t_{i+1}}\right)\,\big|\, \F_{t_i}\right] \\
&=&
\int_{-\infty}^\infty S_0\e^{\mu t_{i+1}+\sigma\left(\hat B_{t_{i+1}}(t_i)+\sqrt{\hat r(t_{i+1}|t_i)}z\right)} \phi(z)\, \d z.
\end{eqnarray*}
The formula for $\Delta\hat S_{t_{i+1}}(t_i)$ follows from this.

Consider then $\Delta V^\pi_{t_{i+1}}(t_i)$.
By choosing
$$
g(x) = f\left(S_0\e^{\mu t + \sigma x}\right)
$$
we obtain
\begin{eqnarray*}
\hat V^\pi_{t_{i+1}}(t_i)
&=&
\E\left[ V^\pi_{t_{i+1}}\, |\, \F_{t_i} \right] \\
&=&
\E\left[ f(S_{t_{i+1}})\, |\, \F_{t_i} \right] \\
&=&
\E\left[ g(B_{t_{i+1}})\, |\, \F_{t_i} \right] \\
&=&
\int_{-\infty}^\infty g\left(\hat B_{t_{i+1}}(t_i)+\sqrt{\hat r(t_{i+1}|t_i)}z\right) \phi(z)\, \d z\\
&=&
\int_{-\infty}^\infty
f\left(S_0\e^{\mu t_{i+1} + \sigma\left(\hat B_{t_{i+1}}(t_i)+\sqrt{\hat r(t_{i+1}|t_i)}z\right)}\right) \phi(z)\, \d z.
\end{eqnarray*}
The formula for $\Delta\hat V^{\pi}_{t_{i+1}}(t_i)$ follows from this.

Finally, we calculate
\begin{eqnarray*}
\hat V^{\pi^N,k}_{t_{i+1}}(t_i)
&=&
\E\left[ V^{\pi^N,k}_{t_{i+1}}\, \big| \, \F_{t_i}\right] \\
&=&
V^{\pi^N,k}_{t_i} +
\E\left[\int_{t_{i}}^{t_{i+1}} \pi^N_u \, \d S_u - \int_{t_i}^{t_{i+1}} k S_{u}|\d\pi^N_{u}|\,\Big|\, \F_{t_i}\right] \\
&=&
V^{\pi^N,k}_{t_i} + \pi_{t_i}^N\left(\E\left[S_{t_{i+1}}\big| \F_{t_i}\right] - S_{t_i}\right) - k S_{t_i}|\Delta \pi^N_{t_i}| \\
&=&
V^{\pi^N,k}_{t_i} + \pi_{t_i}^N\Delta\hat S_{t_{i+1}}(t_i) - k S_{t_i}|\Delta \pi^N_{t_i}|.
\end{eqnarray*}
The formula for $\Delta\hat V^{\pi^N,k}_{t_{i+1}}(t_i)$ follows from this.
\end{proof}

Now we are ready to state and prove our main result.  We note that, in principle, our result is general: it is true in any pricing model where the option $f(S_T)$ can be replicated. In practice, our result is specific to the fractional Black--Scholes model via Lemma \ref{lmm:c-deltas}.

\begin{thm}[Conditional-Mean Hedging Strategy]\label{thm:dhedging}
The conditional mean hedge of the European vanilla type option with convex or concave positive payoff function $f$ with proportional transaction costs $k$ is given by the recursive equation
\begin{equation}\label{eq:dhedging}
\pi^N_{t_i}
=
\frac{\Delta\hat V^\pi_{t_{i+1}}(t_i) + (V^\pi_{t_i}-V^{\pi^N,k}_{t_i}) + k S_{t_i}|\Delta\pi^N_{t_i}|}{\Delta\hat S_{t_{i+1}}(t_i)},
\end{equation}
where $V^{\pi^N,k}_{t_i}$ is determined by \eqref{eq:kappa}.
\end{thm}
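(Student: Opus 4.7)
The strategy is to plug Lemma \ref{lmm:c-deltas} directly into the conditional-mean hedging condition \eqref{eq:cm-hedge} and solve the resulting equation for $\pi^N_{t_i}$. All the heavy lifting (evaluating the conditional expectations through the prediction formula for fractional Brownian motion) has already been done in Lemma \ref{lmm:c-deltas}; what remains is essentially a one-line algebraic manipulation.

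First I would rewrite the right-hand side of \eqref{eq:cm-hedge}. By the definition of $\Delta\hat V^\pi_{t_{i+1}}(t_i)$ stated at the beginning of the section,
$$
\E\left[V^\pi_{t_{i+1}}\,\big|\,\F_{t_i}\right] = V^\pi_{t_i} + \Delta\hat V^\pi_{t_{i+1}}(t_i).
$$
Next I would rewrite the left-hand side of \eqref{eq:cm-hedge}. Combining the definition of $\Delta\hat V^{\pi^N,k}_{t_{i+1}}(t_i)$ with the third identity of Lemma \ref{lmm:c-deltas} gives
$$
\E\left[V^{\pi^N,k}_{t_{i+1}}\,\big|\,\F_{t_i}\right] = V^{\pi^N,k}_{t_i} + \pi^N_{t_i}\,\Delta\hat S_{t_{i+1}}(t_i) - kS_{t_i}\bigl|\Delta\pi^N_{t_i}\bigr|.
$$

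Equating the two expressions and isolating the term $\pi^N_{t_i}\,\Delta\hat S_{t_{i+1}}(t_i)$ yields
$$
\pi^N_{t_i}\,\Delta\hat S_{t_{i+1}}(t_i) = \Delta\hat V^\pi_{t_{i+1}}(t_i) + \bigl(V^\pi_{t_i}-V^{\pi^N,k}_{t_i}\bigr) + kS_{t_i}\bigl|\Delta\pi^N_{t_i}\bigr|,
$$
and dividing by $\Delta\hat S_{t_{i+1}}(t_i)$ (which is a strictly positive quantity under the dynamics \eqref{eq:sde-S}, as can be read off from the first formula in Lemma \ref{lmm:c-deltas} together with Jensen's inequality) gives precisely \eqref{eq:dhedging}.

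The only conceptual point worth flagging is that \eqref{eq:dhedging} is implicit in $\pi^N_{t_i}$, because $\Delta\pi^N_{t_i}=\pi^N_{t_i}-\pi^N_{t_{i-1}}$ appears inside the absolute value on the right. This is not a real obstacle: given $\pi^N_{t_{i-1}}$ and $V^{\pi^N,k}_{t_i}$ (both $\F_{t_i}$-measurable and known from the previous step of the recursion), the right-hand side is a piecewise linear function of $\pi^N_{t_i}$ with slopes $\pm kS_{t_i}$, whereas the left-hand side has slope $\Delta\hat S_{t_{i+1}}(t_i)$; this yields a unique solution under the natural small-cost condition $kS_{t_i} < \Delta\hat S_{t_{i+1}}(t_i)$. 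Thus the recursion is well-posed and \eqref{eq:dhedging} equivalently characterises the conditional-mean hedge.
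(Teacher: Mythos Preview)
Your argument is essentially the paper's: both evaluate the two sides of \eqref{eq:cm-hedge} --- you by citing Lemma~\ref{lmm:c-deltas}, the paper by redoing the third computation of that lemma inline --- and then solve for $\pi^N_{t_i}$.

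One correction to your side remark: the claim that $\Delta\hat S_{t_{i+1}}(t_i)>0$ is not true in general. From the first formula in Lemma~\ref{lmm:c-deltas} the integral equals $\exp\bigl(\mu\Delta t_{i+1}+\sigma\Delta\hat B_{t_{i+1}}(t_i)+\tfrac12\sigma^2\hat r(t_{i+1}|t_i)\bigr)$, and since $\Delta\hat B_{t_{i+1}}(t_i)$ is an unbounded $\F_{t_i}$-measurable random variable this exponent can be negative; Jensen only bounds the integral below by $\exp\bigl(\mu\Delta t_{i+1}+\sigma\Delta\hat B_{t_{i+1}}(t_i)\bigr)$, which need not exceed $1$. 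The paper simply divides without comment, so you are not missing anything relative to it, but your justification for the division is incorrect, and your uniqueness condition should read $kS_{t_i}<|\Delta\hat S_{t_{i+1}}(t_i)|$.
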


\begin{proof}
Let us first consider the left hand side of \eqref{eq:cm-hedge}. We have
\begin{eqnarray*}
\E\left[V^{\pi^N,k}_{t_{i+1}}\,\big|\, \F_{t_i}\right]
&=&
\E\left[V^{\pi^N,k}_{t_{i}} + \int_{t_i}^{t_{i+1}}\pi^N_u\, \d S_u
-k \int_{t_i}^{t_{i+1}} S_u |\d\pi_u^N|\,\Big|\, \F_{t_i}\right] \\
&=&
V^{\pi^N,k}_{t_i}
+ \pi^N_{t_i} \E\left[ S_{t_{i+1}}(t_i) - S_{t_i}\,\big|\, \F_{t_i}\right] - kS_{t_i}|\Delta\pi^N{t_i}| \\
&=&
V^{\pi^N,k}_{t_i}
+ \pi^N_{t_i} \Delta\hat S_{t_{i+1}}(t_i) - kS_{t_i}|\Delta\pi^N_{t_i}|.
\end{eqnarray*}
For the right-hand-side of \eqref{eq:cm-hedge}, we simply write
\begin{eqnarray*}
\E\left[V^{\pi}_{t_{i+1}}\,\big|\, \F_{t_i}\right]
&=&
\Delta\hat V^\pi_{t_{i+1}}(t_i) + V^\pi_{t_i}.
\end{eqnarray*}
Equating the sides we obtain \eqref{eq:dhedging} after a little bit of simple algebra.
\end{proof}

\begin{rem}
Taking the expected gains $\Delta\hat S_{t_{i+1}}(t_i)$ to be the num\'eraire, one recognizes three parts in the hedging formula \eqref{eq:dhedging}.   First, one invests on the expected gains in the time-value of the option. This ``conditional-mean Delta-hedging'' is intuitively the most obvious part.  Indeed, a na\"ive approach to conditional-mean hedging would only give this part, forgetting to correct for the tracking-errors already made, which is the second part in \eqref{eq:dhedging}.  The third part in \eqref{eq:dhedging} is obviously due to the transaction costs.
\end{rem}

\begin{rem}
The equation \eqref{eq:dhedging} for the strategy of the conditional-mean hedging is recursive: in addition to the filtration $\F_{t_i}$, the position $\pi^N_{t_{i-1}}$ is needed to determine the position $\pi^N_{t_i}$. Consequently, to determine the conditional-meand hedging strategy by using \eqref{eq:dhedging}, the initial position $\pi^N_0$ must be fixed.  The initial position is, however, not uniquely defined.  Indeed, let $\beta^N_0$ be the position in the riskless asset. Then the conditional-mean criterion \eqref{eq:cm-hedge} only requires that
$$
\beta^N_0 + \pi^N_0\E[S_{t_1}] -kS_0|\pi^N_0|
=
\E[f(S_{t_1})].
$$
There are of course infinite number of pairs $(\beta^N_0,\pi^N_0)$ solving this equation.  A natural way to fix the initial position $(\beta^N_0,\pi^N_0)$ for the investor interested in conditional-mean hedging would be the one with minimal cost.  If short-selling is allowed, the investor is then faced with the minimization problem
$$
\min_{\pi^N_0 \in \R} v(\pi^N_0),
$$
where the initial wealth $v$ is the piecewise linear function
\begin{eqnarray*}
v(\pi^N_0)
&=&
\beta^N_0 + \pi^N_0 S_0 \\
&=&
\left\{\begin{array}{rl}
\E[f(S_{t_1})] - \left(\Delta \hat S_{t_1}(0)-kS_0\right)\pi^N_0,
& \mbox{ if }\quad \pi^N_0\ge 0,\\
\E[f(S_{t_1})] - \left(\Delta \hat S_{t_1}(0)+kS_0\right)\pi^N_0,
& \mbox{ if }\quad \pi^N_0 < 0.
\end{array}\right.
\end{eqnarray*}
Clearly, the minimal solution $\pi^N_0$ is independent of $\E[f(S_{t_1})]$, and, consequently, of the option to be replicated.   Also, the minimization problem is bounded if and only if
\begin{eqnarray*}
k \ge \left|\frac{\Delta\hat S_{t_1}(0)}{S_0}\right|,
\end{eqnarray*}
i.e. the proportional transaction costs are bigger than the expected return on $[0,t_1]$ of the stock.  In this case, the minimal cost conditional mean-hedging strategy starts by putting all the wealth in the riskless asset.
\end{rem}

We end this note by applying Theorem \ref{thm:dhedging} to European call options.

\begin{cor}[European Call Option]\label{cor:call}
Denote
\begin{eqnarray*}
\hat d^+_{t_{i+1}}(t_i)
&=&
\frac{\ln\frac{S_{t_i}}{K} - \mu\Delta t_{i+1} - \sigma\Delta\hat B_{t_{i+1}}(t_i)}{\sigma \sqrt{\hat r(t_{i+1}|t_i)}} - \sigma\sqrt{\hat r(t_{i+1}|t_i)},  \\
\hat d^-_{t_{i+1}}(t_i)
&=&
\frac{\ln\frac{S_{t_i}}{K} - \mu\Delta t_{i+1} - \sigma\Delta\hat B_{t_{i+1}}(t_i)}{\sigma \sqrt{\hat r(t_{i+1}|t_i)}}, \\
\hat X_{t_{i+1}}(t_i) &=& \mu\Delta t_{i+1} + \sigma\Delta\hat B_{t_{i+1}}(t_i)+\frac12\sigma^2\hat r(t_{i+1}|t_i),
\end{eqnarray*}
and let $\Phi$ be the cumulative distribution function of the standard normal law.
Then the conditional-mean hedging strategy for the European call option with strike-price $K$ is given by
\begin{equation}\label{eq:call-cmh}
\pi^N_{t_i} =
\frac{S_{t_i}\e^{\hat X_{t_{i+1}}(t_i)}\Phi(\hat d^+_{t_{i+1}}(t_i)) - K\Phi(\hat d^-_{t_{i+1}}(t_i))-V^{\pi^N,k}_{t_i} + k S_{t_i}|\Delta\pi^N_{t_i}|}{\Delta\hat S_{t_{i+1}}(t_i)}.
\end{equation}
\end{cor}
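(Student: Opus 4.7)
The plan is to apply Theorem \ref{thm:dhedging} to the payoff $f(x)=(x-K)^+$ and evaluate the only non-elementary piece explicitly, namely $\hat V^\pi_{t_{i+1}}(t_i)=\Delta\hat V^\pi_{t_{i+1}}(t_i)+V^\pi_{t_i}=\E[(S_{t_{i+1}}-K)^+\,|\,\F_{t_i}]$. Once this is identified with $S_{t_i}\e^{\hat X_{t_{i+1}}(t_i)}\Phi(\hat d^+_{t_{i+1}}(t_i))-K\Phi(\hat d^-_{t_{i+1}}(t_i))$, the formula \eqref{eq:call-cmh} drops out of \eqref{eq:dhedging} by direct substitution and a trivial regrouping of the terms in the numerator.

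To evaluate the conditional expectation, I invoke Lemma \ref{lmm:c-deltas} (equivalently, Remark \ref{rem:cond-asset}): with $a:=\mu\Delta t_{i+1}+\sigma\Delta\hat B_{t_{i+1}}(t_i)$ and $b:=\sigma\sqrt{\hat r(t_{i+1}|t_i)}$,
\begin{equation*}
\E\bigl[(S_{t_{i+1}}-K)^+\,\big|\,\F_{t_i}\bigr]
=\int_{-\infty}^{\infty}\bigl(S_{t_i}\e^{a+bz}-K\bigr)^+\phi(z)\,\d z.
\end{equation*}
The integrand is non-zero precisely on the half-line $\{z:S_{t_i}\e^{a+bz}>K\}$, whose lower bound is a linear expression in $\ln(S_{t_i}/K)$, $a$ and $b$ that matches, up to sign, the argument appearing in $\hat d^-_{t_{i+1}}(t_i)$. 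Splitting the integrand into the $S_{t_i}\e^{a+bz}$ piece and the $-K$ piece, the $K$-piece is directly a tail probability of the standard normal and yields the term $K\,\Phi(\hat d^-_{t_{i+1}}(t_i))$.

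For the $S$-piece, I complete the square in the exponent
\begin{equation*}
a+bz-\tfrac12 z^2
=\bigl(a+\tfrac12 b^2\bigr)-\tfrac12(z-b)^2
=\hat X_{t_{i+1}}(t_i)-\tfrac12(z-b)^2,
\end{equation*}
pull the constant $S_{t_i}\e^{\hat X_{t_{i+1}}(t_i)}$ outside, and then make the change of variables $w=z-b$. The resulting Gaussian tail integral evaluates to $\Phi$ at the lower integration limit shifted by $b=\sigma\sqrt{\hat r(t_{i+1}|t_i)}$, which is precisely the definition of $\hat d^+_{t_{i+1}}(t_i)$. Adding the two pieces gives the desired closed form for $\hat V^\pi_{t_{i+1}}(t_i)$, and plugging into \eqref{eq:dhedging} (using $\Delta\hat V^\pi+V^\pi=\hat V^\pi$) yields \eqref{eq:call-cmh}.

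The only non-routine step is the bookkeeping of signs and the normalisation of the integration limits when translating between $\ln(S_{t_i}/K)$ and $\ln(K/S_{t_i})$ and when applying $\Phi(x)=1-\Phi(-x)$; this is pure Black--Scholes-style algebra and poses no conceptual difficulty, since the conditional law furnished by Lemma \ref{lmm:cond-fbm} makes the integrand a genuine log-normal density with deterministic (albeit path-dependent) drift $a$ and variance $b^2$.
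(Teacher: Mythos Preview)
Your proposal is correct and follows essentially the same route as the paper: compute $\hat V^\pi_{t_{i+1}}(t_i)=\E[(S_{t_{i+1}}-K)^+\mid\F_{t_i}]$ via the log-normal integral from Lemma~\ref{lmm:c-deltas}, identify it with $S_{t_i}\e^{\hat X_{t_{i+1}}(t_i)}\Phi(\hat d^+_{t_{i+1}}(t_i))-K\Phi(\hat d^-_{t_{i+1}}(t_i))$, and substitute into \eqref{eq:dhedging} using $\Delta\hat V^\pi_{t_{i+1}}(t_i)+V^\pi_{t_i}=\hat V^\pi_{t_{i+1}}(t_i)$. The paper's proof is in fact terser than yours, merely asserting the closed-form identity for the Gaussian integral where you spell out the square-completion and change of variables.
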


\begin{proof}
First we note that
\begin{eqnarray*}
\hat V^\mathrm{call}_{t_{i+1}}(t_i)
&=&
\int_{-\infty}^\infty
\left(S_{t_i}\e^{\mu \Delta t_{i+1} + \sigma\Delta\hat B_{t_{i+1}}(t_i)+\sigma\sqrt{\hat r(t_{i+1}|t_i)}z}-K\right)^+ \phi(z)\d z \\
&=&
S_{t_i}\e^{\hat X_{t_{i+1}}(t_i)}\Phi\left(\hat d^+_{t_{i+1}}(t_i)\right) - K\Phi\left(\hat d^-_{t_{i+1}}(t_i)\right).
\end{eqnarray*}
Next we note that
$$
V^\mathrm{call}_{t_i}
= (S_{t_i}-K)^+.
$$
So,
$$
\Delta\hat V^\mathrm{call}_{t_{i+1}}(t_i)
=
S_{t_i}\e^{\hat X_{t_{i+1}}(t_i)}\Phi(\hat d^+_{t_{i+1}}(t_i)) - K\Phi(\hat d^-_{t_{i+1}}(t_i)) - (S_{t_i}-K)^+,
$$
and \eqref{eq:call-cmh} follows from this.
\end{proof}


\bibliographystyle{siam}
\bibliography{pipliateekki}

\begin{thebibliography}{10}

\bibitem{Azmoodeh-2013}
{\sc E.~Azmoodeh}, {\em On the fractional {B}lack--{S}choles market with
  transaction costs}, Communications in Mathematical Finance, 2 (2013),
  pp.~21--40.

\bibitem{Azmoodeh-Mishura-Valkeila-2009}
{\sc E.~Azmoodeh, Y.~Mishura, and E.~Valkeila}, {\em On hedging {E}uropean
  options in geometric fractional {B}rownian motion market model}, Statist.
  Decisions, 27 (2009), pp.~129--143.

\bibitem{Bender-Sottinen-Valkeila-2011}
{\sc C.~Bender, T.~Sottinen, and E.~Valkeila}, {\em Fractional processes as
  models in stochastic finance}, in Advanced mathematical methods for finance,
  Springer, Heidelberg, 2011, pp.~75--103.

\bibitem{Denis-Kabanov-2010}
{\sc E.~Denis and Y.~Kabanov}, {\em Mean square error for the {L}eland-{L}ott
  hedging strategy: convex pay-offs}, Finance Stoch., 14 (2010), pp.~625--667.

\bibitem{Gapeev-Sottinen-Valkeila-2011}
{\sc P.~V. Gapeev, T.~Sottinen, and E.~Valkeila}, {\em Robust replication in
  {$H$}-self-similar {G}aussian market models under uncertainty}, Statist.
  Decisions, 28 (2011), pp.~37--50.

\bibitem{Kabanov-Safarian-2009}
{\sc Y.~Kabanov and M.~Safarian}, {\em Markets with transaction costs},
  Springer Finance, Springer-Verlag, Berlin, 2009.
\newblock Mathematical theory.

\bibitem{Leland-1985}
{\sc H.~E. Leland}, {\em Option pricing and replication with transactions
  costs}, The Journal of Finance, 40 (1985), pp.~1283--1301.

\bibitem{Shokrollahi-Kilicman-2014}
{\sc F.~Shokrollahi and A.~K{\i}l{\i}{\c{c}}man}, {\em Pricing currency option
  in a mixed fractional {B}rownian motion with jumps environment}, Math. Probl.
  Eng.,  (2014), pp.~Art. ID 858210, 13.

\bibitem{Shokrollahi-Kilicman-2015}
\leavevmode\vrule height 2pt depth -1.6pt width 23pt, {\em Actuarial approach
  in a mixed fractional {B}rownian motion with jumps environment for pricing
  currency option}, Adv. Difference Equ.,  (2015), pp.~2015:257, 8.

\bibitem{Shokrollahi-Kilicman-Ibrahim-Ismail-2015}
{\sc F.~Shokrollahi, A.~K{\i}l{\i}{\c{c}}man, N.~A. Ibrahim, and F.~Ismail},
  {\em Greeks and partial differential equations for some pricing currency
  options models}, Malays. J. Math. Sci., 9 (2015), pp.~417--442.

\bibitem{Shokrollahi-Kilicman-Magdziarz-2016}
{\sc F.~Shokrollahi, A.~K{\i}l{\i}{\c{c}}man, and M.~Magdziarz}, {\em Pricing
  {E}uropean options and currency options by time changed mixed fractional
  {B}rownian motion with transaction costs}, Int. J. Financ. Eng., 3 (2016),
  pp.~1650003, 22.

\bibitem{Sottinen-Viitasaari-2017b}
{\sc T.~Sottinen and L.~Viitasaari}, {\em Prediction law of fractional brownian
  motion}, Statistics \& Probability Letters, 129 (2017), pp.~155--166.

\bibitem{Wang-2010a}
{\sc X.-T. Wang}, {\em Scaling and long-range dependence in option pricing {I}:
  pricing {E}uropean option with transaction costs under the fractional
  {B}lack-{S}choles model}, Phys. A, 389 (2010), pp.~438--444.

\bibitem{Wang-2010b}
\leavevmode\vrule height 2pt depth -1.6pt width 23pt, {\em Scaling and long
  range dependence in option pricing, {IV}: pricing {E}uropean options with
  transaction costs under the multifractional {B}lack-{S}choles model}, Phys.
  A., 389 (2010), pp.~789--796.

\bibitem{Wang-Yan-Tang-Zhu-2010}
{\sc X.-T. Wang, H.-G. Yan, M.-M. Tang, and E.-H. Zhu}, {\em Scaling and
  long-range dependence in option pricing {III}: a fractional version of the
  {M}erton model with transaction costs}, Phys. A, 389 (2010), pp.~452--458.

\bibitem{Wang-Zhu-Tang-Yan-2010}
{\sc X.-T. Wang, E.-H. Zhu, M.-M. Tang, and H.-G. Yan}, {\em Scaling and
  long-range dependence in option pricing {II}: pricing {E}uropean option with
  transaction costs under the mixed {B}rownian-fractional {B}rownian model},
  Phys. A, 389 (2010), pp.~445--451.

\end{thebibliography}
\end{document}